\newcommand{\nc}{\newcommand}
\def\1{^{-1}}
\newtheorem{thm}{Theorem}[section]
\newtheorem{cor}{Corollary}[section]
\newtheorem{exa}{Example}
\newtheorem{cla}{Claim}[section]
\theoremstyle{definition}
\newtheorem{rem}{Remark}[section]
\def\1{^{-1}}
\nc{\al}{\alpha} \nc{\bt}{\beta} \nc{\Gm}{\Gamma} \nc{\dl}{\delta}
\nc{\Dl}{\Delta} \nc{\lb}{\lambda} \nc{\Lb}{\Lambda}
\nc{\sg}{\sigma} \nc{\Sg}{\Sigma} \nc{\kp}{\kappa}
\nc{\vf}{\varphi} \nc{\ve}{\varepsilon}
\nc{\bB}{\Bbb{B}} \nc{\bC}{\Bbb{C}} \nc{\bR}{\Bbb{R}}
\nc{\bP}{\Bbb{P}} \nc{\bZ}{\Bbb{Z}}
\nc{\fR}{\frak{R}} \nc{\fF}{\frak{F}} \nc{\cG}{\cal{G}}
\nc{\cL}{\cal{L}} \nc{\cN}{\cal{N}} \nc{\cT}{\cal{T}}
\nc{\pa}{\partial} \nc{\wt}{\widetilde} \nc{\wh}{\widehat}
\nc{\ol}{\overline} \nc{\sbs}{\subset} \nc{\str}{\stackrel}
\nc{\bsl}{\setminus} \nc{\fa}{\forall} \nc{\DP}{\dot{+}}
\nc{\trace}{\operatorname{trace}} \nc{\RRe}{\operatorname{Re}}
\nc{\IIm}{\operatorname{Im}}
\title[Classification of singularities]{4d $\mathcal{N}=2$ SCFT and singularity theory Part III:  Rigid singularity}
\author{Bingyi Chen}
\address{Department of Mathematical Sciences,
Tsinghua University,
Beijing, 100084, P. R. China.}
\email{chenby16@mails.tsinghua.edu.cn}
\author{Dan \ Xie}
\address{Center of Mathematical Sciences and Applications\\
Jefferson Physical Laboratory, Harvard University, Cambridge, 02138, USA}
\email{dxie@cmsa.fas.harvard.edu}
\author{Stephen S.-T.\ Yau} 
\address{Department of Mathematical Sciences,
	Tsinghua University,
	Beijing, 100084, P. R. China.}
\email{yau@uic.edu}
\author{Shing-Tung\ Yau}
\address{Department of Mathematics \\
 Center of Mathematical Sciences and Applications\\
Jefferson Physical Laboratory, Harvard University, Cambridge, 02138, USA}
\email{yau@math.harvard.edu}
\author{Huaiqing Zuo}
\address{Yau Mathematical Sciences Center, Tsinghua University, Beijing, 100084, P. R. China.}
\email{hqzuo@mail.tsinghua.edu.cn}
\thanks{The work of S.-T. Yau is supported by  NSF grant  DMS-1159412, NSF grant PHY-
0937443, and NSF grant DMS-0804454. The work of Stephen S.-T. Yau is supported by NSFC grant 11531007 and Tsinghua University start-up fund. The work of Huaiqing Zuo is supported by NSFC (grant nos. 11771231, 11531007, 11401335) and Tsinghua University Initiative Scientic Research Program. 
The work of Dan Xie is supported by Center for Mathematical Sciences and Applications at Harvard University, and in part by the Fundamental Laws Initiative of the Center for the Fundamental Laws of Nature, Harvard University.}
\begin{document}

\maketitle

\begin{abstract}
  We classify three fold isolated quotient Gorenstein singularity $C^3/G$. These singularities are rigid, i.e. there is no non-trivial deformation, and  we conjecture that  they define 4d  $\mathcal{N}=2$  SCFTs which do not have a Coulomb branch.

	\end{abstract}
	
	\baselineskip=18pt
	\bigskip

\section{Introduction}
Four dimensional (4d) $\mathcal{N}=2$ superconformal field theory (SCFT) can be defined using type IIB string theory on  following background
\begin{equation}
 R^{1,3}\times X;
\end{equation}
Here $X$ is conjectured to be an isolated rational Gorenstein singularity \cite{XY} with a good $C^*$ action, and we take string coupling $g_s\rightarrow 0$ and go to infrared limit \cite{SV,GKP}.  These rational Gorenstein singularities naturally appear in the degeneration limit of compact Calabi-Yau three manifolds, and in fact general definition of Calabi-Yau variety allows such singularity \cite{G}. 

4d $\mathcal{N}=2$ SCFT has a $SU(2)_R\times U(1)_R$ R symmetry, and there are two kinds of half-BPS operators $ E_{r,(0,0)}$ and $ \hat{B}_1$ \cite{DO}. The Coulomb branch deformations are described as follows \cite{ALLM}:
\begin{enumerate}
\item Deformation using half-BPS operator $ E_{r,(0,0)}$:
\begin{equation}
\delta S = \lambda \int d^4x d Q^4 E_{r,(0,0)}+c.c.
\end{equation}
\item  Deformation using half-BPS operator $ \hat{B}_1$:
\begin{equation}
\delta S = m \int d^4x Q^2  \hat{B}_1+c.c.
\end{equation}
\item We can also turn on expectation value of operator $ E_{r,(0,0)}$: $ u_r=\langle E_{r,(0,0)} \rangle$. 
\end{enumerate}
A central question of understanding 4d $\mathcal{N}=2$ SCFT is to understand the low energy physics for general deformations parameterized by
$S=(\lambda, m, u_r)$. The low energy physics is best captured by the Seiberg-Witten geometry \cite{SW}. Usually Seiberg-Witten geometry is described by a family of Rieman surfaces fibered over space $S$, and it is conjectured in \cite{XY} that more general Coulomb branch geometry can be captured by the \textbf{mini-versal} deformation of certain kind of three fold singularity $X$ \cite{GLS}.  Roughly speaking, a deformation is a flat morphism $\pi: Y\rightarrow S$, with $\pi^{-1}(0)$ isomorphic to the singularity $X$, and a mini-versal deformation essentially captures all the deformations. Here $S$ is identified with the parameter space $(\lambda, m, u_r)$  of our (generalized) Coulomb branch. 

Therefore the study of 4d $\mathcal{N}=2$ SCFT and its Coulomb branch solution are reduced to the study of singularity $X$ and its mini-versal deformation. We have classified such $X$ which can be described by complete intersection \cite{XY,YY1,BD}, and the physical aspects of these 4d $\mathcal{N}=2$ SCFTs are studied in \cite{XY1,XY2,XY3,XYY}. All the complete intersection examples studied in \cite{XY,YY1,BD} have non-trivial mini-versal deformation and therefore 
 non-trivial  Coulomb branch.

The purpose of this note  is to study non-complete intersection rational Gorenstein singularities. An interesting class of such singularities are quotient singularity $C^3/G$ with $G$ a finite subgroup of $SL(3)$. One of main results of this paper is the classification of the three dimensional isolated Gorenstein quotient singularity.

We then would like to study mini-versal deformation of these singularities, and a surprising theorem by Schlessinger \cite{S} shows that all such singularities are rigid, i.e. they have no non-trivial deformation \footnote{See \cite{V} for example of rigid compact Calabi-Yau manifolds.}. Therefore the corresponding 4d theory has no Coulomb branch \footnote{Free hypermultiplets do have a Coulomb branch as we can turn on mass deformation.}. We call such theories rigid $\mathcal{N}=2$ theories. It would be very interesting to study more properties of these theories.

\section{Three-fold singularity and 4d $\mathcal{N}=2$ SCFT}
Let's discuss more about the interpretation of $\mathcal{N}=2$ SCFT defined using three fold rational Gorenstein singularity (they are also called canonical singularity \cite{R}). There are two special ways of smoothing a singularity: crepant resolution \cite{R} and mini-versal deformation \cite{GLS}. For the singularities we are interested, we have following facts:
\begin{itemize}
\item Every isolated singularity  has a mini-versal deformation \cite{GLS}, however, the deformation might be trivial.  A class of examples are the quotient singularity considered in this paper.
\item Every three fold canonical singularity has a crepant resolution $f:Y\rightarrow X$ such that $Y$ is Q-factorial \footnote{A Q-factorial variety means that  every Weil divisor on it is Q-Cartier, i.e., some multiple of it is a Cartier divisor.} \cite{K}. There is no  crepant resolution for Q factorial singularity. An example of Q-factorial singularity is the hypersurface singularity: $x^2+y^2+z^2+w^{2k+1}=0$ \cite{R}. The quotient singularity considered in this paper has a crepant resolution with $Y$ smooth as can be seen using toric method.
\item The only rational Gorenstein singularity that admits both trivial versal deformation and crepant resolution is the smooth point. 
\end{itemize} 

Now let's try to interpret the appearance of SCFT using the smoothing of singularity:
\begin{itemize}
\item If our singularity admits non-trivial deformation and the smooth manifold has three cycles (such as the hypersurface singularity), the low energy effective theory includes massless vector multiplet from compactifying self-dual RR four form, and we also have massive BPS states from D3 brane wrapping three cycles. These massive BPS states are in general mutually non-local. In the singular limit, the massive BPS states become massless, and it is expected that one get a SCFT \cite{APSW}.
\item If our singularity admits non-trivial crepant resolution, and the smooth manifold has two cycles and four cycles.  One can have massless hypermultiplets using various NS-NS and RR two forms, and one also have tensile strings from wrapping D3 branes on two cycles (or D5 branes on four cycles). In the singular limit, one get tensionless string and it is  expected that one get a SCFT \cite{W}.  
\end{itemize}
The SCFT considered in \cite{XY,WX} can be interpreted using the deformation of singularity, while the SCFT considered in this paper can be interpreted using crepant resolution. 

The Coulomb branch of a 4d theory is described by the deformation, while the Higgs branch is described by the crepant resolution. The exact Coulomb branch physics is described by the classical geometry of the deformation. The exact Higgs branch is difficult to compute, but we can count its dimension 
by computing the dimension of Mori cone \footnote{Mori cone describes the space of complete curves, which will generate free hypermultiplets.} associated with the crepant resolution. The number of abelian flavor symmetry is given by the rank of local 
class group of the singularity. 

\textbf{Example 1}: Let's consider a 3d singularity defined by equation $x^2+y^2+z^2+w^{2k+1}=0$, and the corresponding $\mathcal{N}=2$ SCFT is  $(A_1, A_{2k})$ Argyres-Douglas theory. The Coulomb branch is identified with the base of mini-versal deformation from which one can compute the Coulomb branch spectrum. There is no Higgs branch, and this agrees with the fact that there is no crepant resolution for the singularity.

\textbf{Example 2}: Let's consider the singularity $x^2+y^2+z^2+w^{2k}=0$, and the corresponding $\mathcal{N}=2$ SCFT is the $(A_1, A_{2k-1})$ Argyres-Douglas theory. The Coulomb branch is identified with the base of mini-versal deformation from which one can compute the Coulomb branch spectrum. There is a one dimensional Higgs branch, and this agrees with the fact that there is a crepant resolution whose Mori cone has dimension one!

\section{Classification of rigid quotient singularity}\label{sec1}

Let $G$ be a finite subgroup of $GL(3,\Bbb C)$ and it acts on $\Bbb C^3$ in a natural may. Cartan \cite{Car} has studied the quotient variety $\Bbb C^3/G$ and  proved that the singularities of $\Bbb C^3/G$ are normal. So the dimension of the singular set of $\Bbb C^3/G$ is either 0 or 1. In this article we are interested in the case that $\Bbb C^3/G$ has a  Gorenstein isolated singlarity. By a theorem of Khinich \cite{Kh} and Watanabe \cite{Wa}, we know that  
\begin{thm}(\cite{Kh} and \cite{Wa})
Let $G$ be a finite subgroup of $GL(3,\Bbb C)$. Then $\Bbb C^3/G$ is  Gorenstein  if and only if $G$ is a subgroup of $SL(3,\Bbb C)$.
\end{thm}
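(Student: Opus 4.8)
The plan is to realize $\mathbb{C}^3/G$ as $\mathrm{Spec}\,R$ with $R = S^G$ the invariant ring of $S = \mathbb{C}[x_1,x_2,x_3]$, and to detect the Gorenstein property through the canonical module of $R$. First I would record the standing structural facts: since $\mathrm{char}\,\mathbb{C}=0$ the Reynolds operator exhibits $R$ as a direct summand of $S$, so $R$ is a normal (cf. Cartan) Cohen--Macaulay graded domain by Hochster--Eagon, with $R_0 = \mathbb{C}$. For such a ring, Gorenstein is equivalent to the canonical module $\omega_R$ being free of rank one. I would also reduce to the case that $G$ is \emph{small}, i.e. contains no pseudo-reflection: this is harmless in the two situations that matter, because a nontrivial pseudo-reflection has determinant $\neq 1$ (so $G\subset SL(3,\mathbb{C})$ already forces smallness), and the isolated-singularity hypothesis of the paper excludes fixed hyperplanes entirely.

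The ``if'' direction is short. If $G \subset SL(3,\mathbb{C})$ then every $g$ acts on the top form $\Omega = dx_1\wedge dx_2\wedge dx_3$ by $\det g = 1$, so $\Omega$ is $G$-invariant and nowhere vanishing. This invariant volume form descends to a global generator of the dualizing sheaf away from the (codimension $\geq 2$) branch locus, and by normality it trivializes $\omega_R$; concretely $\omega_R \cong R\cdot\Omega \cong R$, so $R$ is Gorenstein.

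For the converse I would compute $\omega_R$ explicitly. Since $G$ is small, the finite extension $S/R$ is unramified in codimension one, so descent identifies $\omega_R = (\omega_S)^G$, the $G$-invariants of $\omega_S = S\cdot\Omega$. As $g\cdot(f\Omega) = (\det g)^{-1} g(f)\,\Omega$ in the standard convention, this gives $\omega_R \cong S_{\det} := \{\, f\in S : g(f) = (\det g)\, f \ \text{for all } g\in G\,\}$, the module of semi-invariants of character $\det$. Thus $R$ is Gorenstein if and only if the rank-one reflexive $R$-module $S_{\det}$ is free. The crux is the lemma that $\chi \mapsto [S_\chi]$ defines an \emph{injective} homomorphism from the character group $\widehat{G}$ into the divisor class group $\mathrm{Cl}(R)$; granting it, $S_{\det}\cong R$ forces $\det$ to be the trivial character, i.e. $\det g = 1$ for all $g$, which is exactly $G\subset SL(3,\mathbb{C})$.

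The main obstacle is precisely this injectivity lemma, and it is where smallness is indispensable. I would prove it by a codimension-one analysis: localizing at height-one primes of $R$ the cover $S/R$ is \'etale (no pseudo-reflections means no ramification divisors), so each $S_\chi$ is a divisorial fractional ideal whose class records the local monodromy character $\chi$, and $[S_\chi]=0$ forces $\chi=1$. A concrete alternative for the whole converse, closer to Watanabe's original treatment, is to combine Molien's formula $H_R(t) = |G|^{-1}\sum_{g} \det(1-tg)^{-1}$ with Stanley's criterion that a graded Cohen--Macaulay domain is Gorenstein iff its Hilbert series satisfies $H_R(1/t) = \pm\, t^{a} H_R(t)$, and then to check that this functional equation holds if and only if $\det g \equiv 1$.
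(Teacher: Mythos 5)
The first thing to say is that the paper contains no proof of this statement: it is quoted from Khinich \cite{Kh} and Watanabe \cite{Wa} and used as a black box, so there is no ``paper proof'' to compare yours against, and your proposal has to stand on its own. On its own terms it is essentially the standard modern proof of Watanabe's theorem, and the skeleton is sound: $R=S^G$ is a normal Cohen--Macaulay graded domain, Gorenstein exactly when $\omega_R\cong R$; for a small group the quotient map is unramified in codimension one, so $\omega_R=(\omega_S)^G\cong S_{\det}$, the module of $\det$-semi-invariants; the ``if'' direction is the descent of the invariant volume form, and the ``only if'' direction reduces to your key lemma that $\chi\mapsto[S_\chi]$ is injective into $\mathrm{Cl}(R)$, which is Samuel's Galois descent: for an action unramified in codimension one, $\mathrm{Cl}(S^G)\cong H^1(G,S^\times)=\widehat{G}$ since $S$ is factorial and $S^\times=\mathbb{C}^\times$. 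That lemma is the real crux and you only gesture at its proof (``the class records the local monodromy character''), but the gesture points in a completable direction: smallness forces trivial inertia at every height-one prime $P$ of $S$, and if $S_\chi=Rf_0$ with $f_0$ nonconstant, then averaging against $\chi$ and using Dedekind's independence of the distinct automorphisms induced on $S/P$ produces a semi-invariant of character $\chi$ lying outside a chosen prime factor $P$ of $f_0$, contradicting that every element of $S_\chi$ is a multiple of $f_0$; hence $f_0$ is constant and $\chi=1$. The Molien--Stanley route you mention as an alternative also closes: the functional equation forces the two Molien series to agree, and comparing behavior at $t\to\infty$ gives $\sum_g(1-\overline{\det g})=0$, whose real part is a sum of nonnegative terms, so $\det g=1$ for all $g$.

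One point deserves emphasis because it is a defect of the statement as printed, and you only half-addressed it: without the hypothesis that $G$ contains no pseudo-reflections, the ``only if'' direction is false. For $G=\langle\mathrm{diag}(-1,1,1)\rangle$ one has $\mathbb{C}[x,y,z]^G=\mathbb{C}[x^2,y,z]$, so $\mathbb{C}^3/G\cong\mathbb{C}^3$ is Gorenstein (indeed smooth) while $G\not\subset SL(3,\mathbb{C})$. Watanabe's actual theorem carries the no-pseudo-reflection (smallness) hypothesis. You correctly note that $G\subset SL(3,\mathbb{C})$ forces smallness, so the ``if'' direction is unconditional; but your justification for assuming smallness in the converse --- the paper's isolated-singularity context --- imports a hypothesis that is not in the statement being proved. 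The honest fix is to say plainly that the statement needs the smallness hypothesis (or needs $G$ replaced by its small representative in the sense of Prill, which changes the group and can change the determinant condition, as the example shows); with that hypothesis added, your argument goes through.
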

Let $G'$ be another finite subgroup of $GL(3,\Bbb C)$. We say $G$ is linear equivalent to $G'$ if there exists $g \in GL(3,\Bbb C)$ such that $G=gGg^{-1}$. It's obvious that $\Bbb C^3/G\cong \Bbb C^3/G'$ if $G$ is linear equivalent to $G'$. Yau and Yu [YY2] tell us that
\begin{thm}(\cite{YY2})\label{the0}
Let $G$ be a finite subgroup of $SL(3,\Bbb C)$, then $\Bbb C^3/G$ has a Gorenstein isolated singularity if and only if $G$ is linear equivalent to a diagonal abelian subgroup (i.e. any element in this subgroup is a diagonal matrix) and 1 is not an eigenvalue of $g$ for every nontrivial element $g$ in $G$.
\end{thm}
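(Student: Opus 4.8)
The plan is to separate the two assertions in the theorem: first reduce the property ``\emph{isolated Gorenstein singularity}'' to an eigenvalue condition, and then show that this eigenvalue condition forces $G$ to be abelian.

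\textbf{Step 1 (reduction to a free action).} Since $G\subset SL(3,\Bbb C)$ already guarantees that $\Bbb C^3/G$ is Gorenstein by the theorem of Khinich and Watanabe, only the \emph{isolated} part needs analysis. I would begin by noting that $SL(3,\Bbb C)$ contains no pseudoreflection: a pseudoreflection fixes a hyperplane, hence has eigenvalues $1,1,\lambda$ with $\lambda\neq 1$ and determinant $\lambda\neq 1$. Consequently, by the Chevalley--Shephard--Todd theorem together with the (analytic) slice theorem, for any $p\neq 0$ the quotient is smooth at $[p]$ if and only if the stabilizer $G_p$ is trivial, because a nontrivial $G_p\subset SL(3,\Bbb C)$ is never generated by pseudoreflections. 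Therefore $\Bbb C^3/G$ has an isolated singularity (only at the origin) exactly when $G$ acts freely on $\Bbb C^3\setminus\{0\}$, which is the same as saying that $1$ is not an eigenvalue of any nontrivial $g\in G$, since a nonzero fixed vector of $g$ is precisely an eigenvector for the eigenvalue $1$. This already settles the converse direction of the theorem and reduces the forward direction to a group-theoretic claim.

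\textbf{Step 2 (the free condition forces $G$ abelian).} Assuming no nontrivial $g\in G$ has $1$ as an eigenvalue, I would first exclude elements of order $2$: such an element of $SL(3,\Bbb C)$ must have eigenvalues $\{1,-1,-1\}$, which contains $1$; hence $|G|$ is odd. I then invoke the classification of finite subgroups of $SL(3,\Bbb C)$ and eliminate every non-abelian type. All primitive exceptional subgroups (those built from $A_5$, $A_6$, $PSL(2,7)$, and the Hessian family) and the full-$S_3$ monomial type have even order, so they are ruled out immediately. A \emph{reducible} $G$ preserves a line on which it acts by a character that the eigenvalue condition forces to be faithful, whence $G$ is cyclic. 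The remaining case is \emph{imprimitive and transitive}, where $G$ is monomial and contains an element $M$ whose underlying permutation is a $3$-cycle; in a suitable basis $M^3=(d_1d_2d_3)\,I$, and the determinant-one constraint gives $d_1d_2d_3=1$, so the characteristic polynomial of $M$ is $\lambda^3-1$ and $1$ is an eigenvalue, a contradiction. Thus no non-abelian subgroup survives, and $G$ must be abelian; being finite, its elements are commuting diagonalizable matrices, hence simultaneously diagonalizable, so $G$ is linear equivalent to a diagonal abelian group.

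The main obstacle is Step 2, and within it the transitive imprimitive (monomial) case: this is exactly where the hypothesis $G\subset SL(3,\Bbb C)$ rather than merely $GL(3,\Bbb C)$ is essential, since it is the determinant-one identity $d_1d_2d_3=1$ that forces the eigenvalue $1$ onto the $3$-cycle elements. If one prefers to avoid quoting the full classification, a self-contained substitute is a Clifford-theory argument: by the reducible case the representation may be assumed irreducible, so restrict it to a maximal abelian normal subgroup $A$, which the eigenvalue condition forces to be cyclic; as $\dim V=3$ is prime, $V|_A$ is either a single $G$-orbit of three conjugate characters (the monomial situation, excluded as above) or three copies of one character (so $A$ is central and scalar), and the delicate remaining point is to push this latter alternative down to conclude that $G$ itself is abelian.
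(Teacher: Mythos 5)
You should first be aware that the paper contains no proof of this statement at all: Theorem \ref{the0} is quoted verbatim from the memoir \cite{YY2} of Yau--Yu, so there is no internal argument to compare yours against; the relevant comparison is with the strategy of that cited reference. Measured against that, your proposal is essentially correct and follows the same route. Step 1 is sound: since a pseudoreflection in dimension $3$ has eigenvalues $1,1,\lambda$ with $\lambda\neq 1$ and hence determinant $\neq 1$, no nontrivial subgroup of $SL(3,\Bbb C)$ is generated by pseudoreflections, so Chevalley--Shephard--Todd plus the slice theorem identifies ``isolated singularity'' with ``free action on $\Bbb C^3\setminus\{0\}$,'' i.e.\ with the eigenvalue condition; this also disposes of the sufficiency direction. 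Step 2 is the same type-by-type elimination that \cite{YY2} performs against the Miller--Blichfeldt--Dickson classification of finite subgroups of $SL(3,\Bbb C)$: parity (an involution in $SL(3,\Bbb C)$ has eigenvalues $1,-1,-1$) kills all primitive types and the full-$S_3$ monomial type, the computation $M^3=(d_1d_2d_3)I=I$ kills the transitive imprimitive (monomial) type, and the faithful-character argument shows a reducible $G$ is cyclic, hence diagonalizable.

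Two caveats. First, your reducible case actually proves $G$ is cyclic, which is stronger than the stated ``abelian'' and in effect anticipates Theorem \ref{the1} of the paper. Second, the Clifford-theory ``substitute'' you offer at the end is not a complete alternative: as you yourself note, the case where the maximal abelian normal subgroup is scalar (so $G$ would be primitive) is left unresolved, so as written your proof genuinely depends on quoting the classification --- in particular on the fact that every primitive finite subgroup of $SL(3,\Bbb C)$ has even order. That dependence is legitimate (it is exactly what \cite{YY2} relies on), but it should be flagged as a citation of a deep classical result rather than presented as an elementary step.
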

In this article, we will find out all subgroups $G\subseteq SL(3,\Bbb C)$ which satisfy the condition in Theorem \ref{the0}, i.e. all the subgroups which corresponds to a three-dimensional  Gorenstein isolated quotient singularity. In fact, we prove that
\begin{thm}\label{the1}
Let $G$ be a finite subgroup of $SL(3,\Bbb C)$. Then $\Bbb C^3/G$ has a Gorenstein isolated singularity if and only if $G$ is linear equivalent to a cyclic subgroup which is generated by a diagonal matrix 
$$\begin{pmatrix}
\zeta(1/n)& 0& 0\\
0& \zeta(p/n)& 0\\
0& 0& \zeta(q/n)
\end{pmatrix}$$
where $\zeta(\ast)=e^{2\pi \sqrt{-1}\ast}$ and  $p,q,n$ are positive integers such that $p,q$ are coprime with $n$ and $1+p+q=n$.
\end{thm}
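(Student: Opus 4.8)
The plan is to derive the explicit normal form directly from Theorem \ref{the0}, so that the entire argument reduces to elementary group theory together with a short number-theoretic normalization. Since linear equivalence does not change the isomorphism type of $\mathbb{C}^3/G$, the forward direction of Theorem \ref{the0} lets me assume from the outset that $G$ is a finite \emph{diagonal} abelian subgroup of $SL(3,\mathbb{C})$ in which no nontrivial element has $1$ as an eigenvalue. Writing each $g\in G$ as $\mathrm{diag}(\pi_1(g),\pi_2(g),\pi_3(g))$, the three coordinate maps $\pi_1,\pi_2,\pi_3\colon G\to\mathbb{C}^\ast$ are group homomorphisms whose values are roots of unity, and the determinant condition reads $\pi_1(g)\pi_2(g)\pi_3(g)=1$.

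The key observation is that the eigenvalue hypothesis forces each $\pi_i$ to be \emph{injective}: if $g\neq e$ lay in $\ker\pi_i$, then the $i$-th diagonal entry of $g$ would equal $1$, so $g$ would have $1$ as an eigenvalue, contradicting the assumption. Hence $\pi_1$ embeds $G$ into $\mathbb{C}^\ast$, and a finite subgroup of $\mathbb{C}^\ast$ is cyclic, so $G$ is cyclic of some order $n$. This single step is what produces the cyclicity asserted in the statement, and I expect it to be the conceptual heart of the proof; everything after it is bookkeeping rather than a genuine obstacle.

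Next I would fix a generator and normalize it. Because $\pi_1$ is injective, the $\pi_1$-image of a generator is a \emph{primitive} $n$-th root of unity, say $\zeta(a/n)$ with $\gcd(a,n)=1$; replacing the generator by its $b$-th power, where $ab\equiv 1\pmod n$, yields a generator $g$ with $\pi_1(g)=\zeta(1/n)$. Writing $g=\mathrm{diag}(\zeta(1/n),\zeta(p/n),\zeta(q/n))$, the injectivity of $\pi_2$ and $\pi_3$ forces $\zeta(p/n)$ and $\zeta(q/n)$ to be primitive as well, i.e.\ $\gcd(p,n)=\gcd(q,n)=1$, while $\det g=1$ gives $\zeta((1+p+q)/n)=1$, that is $1+p+q\equiv 0\pmod n$. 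Choosing the representatives $1\le p,q\le n-1$ (which is possible since $\gcd(p,n)=\gcd(q,n)=1$ and $n>1$), the sum $1+p+q$ lies strictly between $0$ and $2n$, so the congruence pins it down to exactly $1+p+q=n$. This range argument is the only slightly delicate point, and it is routine.

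Finally I would verify the converse direction to close the \emph{if and only if}. For a matrix $g$ of the stated form, $\det g=\zeta(n/n)=1$, so $g\in SL(3,\mathbb{C})$; and for $1\le k<n$ the coprimality $\gcd(1,n)=\gcd(p,n)=\gcd(q,n)=1$ guarantees that none of $\zeta(k/n),\zeta(kp/n),\zeta(kq/n)$ equals $1$, so no nontrivial power of $g$ has eigenvalue $1$. Thus $\langle g\rangle$ is a diagonal abelian subgroup meeting the hypotheses of Theorem \ref{the0}, whence its backward direction shows that $\mathbb{C}^3/\langle g\rangle$ carries an isolated Gorenstein singularity, completing the equivalence.
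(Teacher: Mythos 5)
Your proof is correct, but the necessity direction takes a genuinely different and considerably shorter route than the paper's. The paper invokes the fundamental theorem of finite abelian groups to write $G$ as a direct sum of cyclic groups of prime-power order, uses the eigenvalue condition to rule out two factors attached to the same prime (a contradiction argument built around the element $g_1 g_2^{s'}$), and then assembles an explicit single generator of the resulting product of coprime cyclic groups via the Chinese remainder theorem, finally checking coprimality of the last two exponents by another eigenvalue argument. You instead observe that each coordinate projection $\pi_i\colon G\to\mathbb{C}^\ast$ is injective, since any nontrivial element of $\ker\pi_i$ would have $1$ as an eigenvalue; hence $G$ embeds into $\mathbb{C}^\ast$ and is cyclic because finite subgroups of $\mathbb{C}^\ast$ are cyclic. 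That single observation replaces the paper's entire prime-by-prime analysis, your normalization (passing to the power of a generator that inverts $a$ modulo $n$) replaces the paper's CRT construction, the coprimality of $p$ and $q$ comes for free from injectivity of $\pi_2,\pi_3$ rather than from a separate contradiction argument, and your range argument $0<1+p+q<2n$ pins down $1+p+q=n$ directly. The sufficiency direction is essentially identical in both. What the paper's approach buys is an explicit picture of how the cyclic generator is built from the original direct-sum decomposition; what yours buys is brevity and conceptual transparency, and both (like the paper) quietly set aside the degenerate case of the trivial group, where the quotient is smooth.
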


A polynomial $f \in \Bbb C[x,y,z]$ is called an invariant polynomial of $G\subseteq SL(3,\Bbb C)$ if $f(g(p))=f(p)$ for any element $g \in G$ and any point $p \in \Bbb C^3 $. Denote by $S^G$ the subalgebra of $\Bbb C[x,y,z]$ that consists of all invariants of $G$. Then the quotient variety $\Bbb C^3/G$ is isomorphic to the algebraic variety Spec$(S^G)$. If $\{f_1,\dots,f_k\}$ is a minimal set of homogeneous polynomials which generated $S^G$ (as a $\Bbb C-$algebra), then we call $f_i's$ the minimal generators of $S^G$. Geometrically, $k$, the number of minimal generators of $S^G$, is the minimal embedding dimension of $\Bbb C^3/G$. 

Consider the following ring homomorphism
\begin{align*}
\phi:\Bbb C[y_1,\dots,y_k] &\rightarrow S^G\\
y_i\quad &\mapsto  f_i
\end{align*}
where $f_1,\dots,f_k$ are minimal generators of $S^G$.
Let $K$ be the kernel of $\phi$, then the generators of $K$ are called the relations of minimal generators $f_1,\dots,f_k$. Geometrically, these relations are the equations which define the affine variety Spec$(S^G)$ as a subvariety of $\Bbb C^k$. Associate to $y_1,y_2,\dots, y_k$ a weight system $(w_1,w_2,\dots,w_k)$, where
\begin{equation}\label{weight}
w_i=\deg f_i
\end{equation}
for $i=1,2,\dots,k$. With respect to this weight system, $K$ is a weighted homogeneous ideal of $\Bbb C[y_1,\dots,y_k]$, so $\Bbb C^3/G$ has a weighted homogeneous singularity.

In Section 3, we construct a set of minimal generators of $S^G$ and find out their relations for each subgroup $G$ corresponding to a  three-dimensional Gorenstein isolated quotient singularity. And we get the following corollary.

\begin{cor}
The minimal embedding dimension of a three-dimensional Gorenstein isolated quotient singularity $\Bbb C^3/G$ is no less than 10.
\end{cor}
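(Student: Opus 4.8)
The plan is to recast the statement as a purely combinatorial count of the irreducible elements of an affine semigroup, and then bound that count from below by restricting to the three coordinate planes. By Theorem \ref{the1} I may assume $G$ is generated by $\diag(\zeta,\zeta^p,\zeta^q)$ with $\zeta=\zeta(1/n)$, $\gcd(p,n)=\gcd(q,n)=1$ and $1+p+q=n$. A monomial $x^ay^bz^c$ is $G$-invariant exactly when $a+pb+qc\equiv 0 \pmod n$, so $S^G$ is the semigroup algebra of
\[
\Sigma=\set{(a,b,c)\in\Bbb{Z}_{\ge 0}^3 : a+pb+qc\equiv 0 \ (\mathrm{mod}\ n)}.
\]
Since $\Sigma$ lies in the positive octant it is pointed, so its unique minimal set of algebra generators is the set of its irreducible elements (those not a sum of two nonzero elements of $\Sigma$). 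As noted before the Corollary, their number is the minimal embedding dimension $k$, so I must show $\Sigma$ has at least $10$ irreducible elements.

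First I would restrict to the coordinate planes. Setting $c=0$ gives the two-dimensional sub-semigroup $\Sigma_{xy}=\set{(a,b):a+pb\equiv 0}$, which is the invariant semigroup of the surface quotient $\tfrac1n(1,p)$; likewise $\Sigma_{xz}$ (type $\tfrac1n(1,q)$) and $\Sigma_{yz}=\set{(b,c):pb+qc\equiv 0}$ (type $\tfrac1n(1,p^{-1}q)$). An element with a vanishing coordinate can only factor into elements sharing that vanishing coordinate, so an irreducible element of any $\Sigma_{\bullet}$ stays irreducible in $\Sigma$; hence the three plane Hilbert bases inject into that of $\Sigma$.

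The crux is a lower bound of $4$ per plane. I would prove a small lemma: for any $m$ with $\gcd(m,n)=1$ and $m\not\equiv -1\pmod n$, the semigroup $\set{(a,b)\ge 0:a+mb\equiv 0}$ has at least the four irreducible elements $(n,0)$, $(0,n)$, $(n-m,1)$ and $(1,s)$ with $s=(-m^{-1})\bmod n$. Irreducibility of each is immediate, since any nontrivial splitting would force a factor that is a proper power $x^{jn}$ or $y^{jn}$, which is too large by the degree bookkeeping; and the two edge points coincide, $(n-m,1)=(1,s)$, precisely when $m=n-1$, i.e. $m\equiv -1$. This is the familiar statement that $\tfrac1n(1,m)$ is a hypersurface (only three generators) exactly in the $A_{n-1}$ case. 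I then check the hypothesis for all three planes: $p\equiv -1$ forces $q=0$, $q\equiv -1$ forces $p=0$, and $p^{-1}q\equiv -1$ means $p+q\equiv 0$, i.e. $n-1\equiv 0$ — all impossible under $1+p+q=n$ with $p,q\ge 1$ and $n\ge 3$. Thus each plane contributes at least four irreducibles.

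Finally I would assemble the count by inclusion–exclusion. The three plane bases overlap only in the axis monomials — $x^n$ is shared by $\Sigma_{xy}$ and $\Sigma_{xz}$, and cyclically for $y^n,z^n$ — with empty triple overlap, so their union has at least $4+4+4-3=9$ elements. Since $1+p+q\equiv 0$, the monomial $xyz$ is invariant, is irreducible by the same degree argument, and lies on none of the planes, supplying a tenth generator; hence $k\ge 10$. I expect the only delicate points to be the irreducibility-and-distinctness bookkeeping in the lemma and the verification that all three planes avoid the $A_{n-1}$ case. The bound is sharp: at $n=3$, where $p=q=1$ and every plane is $\tfrac13(1,1)$, all inequalities are equalities and $\Bbb{C}^3/G$ is the cone over the Veronese surface in $\Bbb{P}^9$, of embedding dimension exactly $10$.
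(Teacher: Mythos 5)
Your proof is correct, but it takes a genuinely different route from the paper's. The paper obtains the bound as a consequence of Theorem \ref{the3}, which determines the \emph{complete} minimal generating set of $S_{n,p}$ --- namely $x^n,y^n,z^n,xyz$ together with the Riemenschneider generators $A_{xy}(n,p)$, $A_{xz}(n,n-p-1)$, $A_{yz}(n,r)$ on the three coordinate planes --- so that the embedding dimension equals exactly $4+l(n/(n-p))+l(n/(p+1))+l(n/(n-r))$, and then notes that each continued-fraction length is at least $2$. You prove only the lower bound, but do so self-containedly: you exhibit ten explicit irreducible elements of the invariant semigroup (the three axis monomials, two mixed monomials per coordinate plane, and $xyz$), using the elementary facts that an irreducible element of a coordinate-plane subsemigroup stays irreducible in $\Sigma$ and that the degenerate $A_{n-1}$ case is excluded on all three planes by $1+p+q=n$. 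In fact your four elements per plane, $(n,0),(0,n),(n-m,1),(1,s)$, are precisely Riemenschneider's $f_0,f_{e+1},f_1,f_e$, and your condition $m\not\equiv -1 \pmod n$ is exactly equivalent to the paper's claim that each continued fraction has length $\geq 2$ (a claim the paper justifies rather tersely: coprimality of $m$ alone does not preclude $m=n-1$; one needs, as you observe, that the complementary weight would then vanish). What the paper's route buys is the exact embedding dimension and the full relation ideal; what yours buys is independence from Riemenschneider's theorem and from the lengthy minimality-and-relations argument of Theorem \ref{the3}, plus the sharpness remark at $n=3$. The two points you leave slightly informal --- the ``degree bookkeeping'' in your lemma and the irreducibility of $xyz$ --- are one-line checks (any splitting of $(1,1,1)$ forces an invariant $0$--$1$ exponent vector, and all six candidates are excluded because $1,\,p,\,q,\,1+p,\,1+q,\,p+q\not\equiv 0 \pmod n$), so there is no genuine gap.
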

\begin{rem}
\cite{BD} proves that the minimal embedding dimension of a three-dimensional rational isolated complete intersection singularity is at most 5. Hence a three-dimensional Gorenstein isolated quotient singularity must be non-complete intersection.
\end{rem}

At the end of this section, we introduce some notations.

(1) For any positive integer $k$, $k$ can be written as $$k=\frac{1}{a_1-\frac{1}{a_2-\frac{1}{...-\frac{1}{a_e}}}}$$
where $a_i's$ are  positive integers. It's called the continued fraction expansion of $k$, and is denoted by $$k=[a_1,a_2,\dots,a_e].$$
We call $e$ the length of the continued fraction expansion of $k$, which is denoted by $l(k)$.

(2) Let $g$ be a monomial in $\Bbb C[x_1,\dots,x_n]$.
Denote by Supp$(g)$ the set consists of variables involved in $g$. For example, if $g=x_1x_2$, then Supp$(g)=\{x_1,x_2\}$.

(3) We denote by $\langle a,b,c \rangle$ the $3 \times 3$ diagonal matrix whose diagonal elements are $a,b,c$. 
Similarly we denote by $\langle a,b \rangle$ the $2 \times 2$ diagonal matrix whose diagonal elements are $a,b$. 

(4) Let $\zeta(q)=e^{2 \pi \sqrt{-1}q}$ for any real  number $q$. 

(5) If $A$ is a matrix, we denote its $(i,j)$-entry by $A[i,j]$.

\begin{proof}[\textbf{Proof of Theorem \ref{the1}}]
We first prove the sufficiency. If $G$ is generated by a diagonal matrix $\langle \zeta(1/n),\zeta(p/n),\zeta(q/n)\rangle$, where $p,q,n$ are positive integers, $p,q$ are coprime with $n$ and $1+p+q=n$, then each element $g \in G$ can be written as $\langle \zeta(k/n),\zeta(kp/n),\zeta(kq/n)\rangle$ for some integer $k$. If 1 is an eigenvalue of $g$, since $1,p,q$ are coprime with $n$, we have $k \equiv 0$ (mod $n$), which follows that $g$ is the unit matrix. By Theorem \ref{the0}, $\Bbb C^3/G$ has a Gorenstein isolated singularity.

Next we prove the necessity. If $\Bbb C^3/G$ has an isolated singularity, then by Theorem \ref{the0}, 1 is not an eigenvalue of $g$ for every nontrivial element $g$ in $G$ and we may suppose that $G$ is a diagonal abelian subgroup. By the fundamental theorem for finite abelian groups, G is the direct sum of cyclic groups:
$$G=\oplus _{i=1}^m \oplus _{j=1}^{r_i} G_{ij} $$
where $G_{ij}$ is a cyclic group whose order is $p_i^{n_{ij}}$, $p_1,p_2,\dots,p_m$ are distinct prime numbers and 
$$1\leq n_{i1}\leq n_{i2}\leq\dots \leq n_{ir_i},\quad i=1,\dots,m.$$
$G_{ij}$ is generated by a diagonal matrix $g_{ij}=\langle \zeta(a_{ij}/p_i^{n_{ij}}),\zeta(b_{ij}/p_i^{n_{ij}}),\zeta(c_{ij}/p_i^{n_{ij}})\rangle$ for $i=1,\dots,m$ and $j=1,\dots,r_i$, and $a_{ij}+b_{ij}+c_{ij} \equiv 0$ (mod $p_i^{n_{ij}}$). Since $g_{ij}^t\neq I$ ($I$ is the unit matrix) for $1\leq t<p_i^{n_{ij}}$, 1 is not an eigenvalue of $g_{ij}^t$, hence $ta_{ij},tb_{ij},tc_{ij}\not \equiv 0 
$ (mod $p_i^{n_{ij}}$) for $1\leq  t<p_i^{n_{ij}}$. Thus $a_{ij},b_{ij},c_{ij}$ are coprime with $p_i$ for  $i=1,\dots,m$ and  $j=1,\dots,r_i$.

We claim that $r_i=1$ for $i=1,\dots,m$. Assume that $r_1>1$, and for convenience in the sequel we will denote $p_1$ by $p$ and denote $G_{1i},g_{1i},a_{1i},b_{1i},c_{1i},n_{1i}$ by $G_i,g_i,a_i,b_i,c_i,n_i$ respectively, so $G_1$ is generated by $g_1=\langle\zeta(a_1/p^{n_1}),\zeta(b_1/p^{n_1}),\zeta(c_1/p^{n_1})\rangle$ and $G_2$ is generated by $g_2=\langle\zeta(a_2/p^{n_2}),\zeta(b_2/p^{n_2}),\zeta(c_2/p^{n_2})\rangle$. Since $a_2$ is coprime with $p$, there exist a integer $s$ such that $p^{n_1} \mid (a_1+sa_2)$, hence $p^{n_2} \mid (p^{n_2-n_1}a_1+p^{n_2-n_1}sa_2) $. Let $s'=p^{n_2-n_1}s$ then $$\zeta(a_1/p^{n_1})\zeta(a_2/p^{n_2})^{s'}=\zeta((p^{n_2-n_1}a_1+s'a_2)/p^{n_2})=1,$$ thus 1 is an eigenvalue of $g_1g_2^{s'}$, which follows that $g_1g_2^{s'}=I$. Hence $$\zeta(b_1/p^{n_1})\zeta(b_2/p^{n_2})^{s'}=\zeta((p^{n_2-n_1}b_1+s'b_2)/p^{n_2})=1$$ and $$\zeta(c_1/p^{n_1})\zeta(c_2/p^{n_2})^{s'}=\zeta((p^{n_2-n_1}c_1+s'c_2)/p^{n_2})=1.$$
Thus $g_2^{s'}g_1=I$, which leads to contradiction with $G_1\cap G_2=\{I\}$. Thus $r_1=1$. Similarly we have $r_i=1$ for $i=1,2,\dots,m$ and thus $G$ is generated by matrices $g_i=\langle \zeta(a_{i}/p_i^{n_{i}}),\zeta(b_{i}/p_i^{n_{i}}),\zeta(c_{i}/p_i^{n_{i}})\rangle$ for $i=1, \cdots, m$, where $p_1,p_2,\cdots,p_m$ are distinct primes, $a_i,b_i,c_i$ are coprime with $p_i$ and $a_i+b_i+c_i\equiv 0$ (mod $p_i^{n_i}$). 

Since $a_i$ is coprime with $p_i$, there exists a  integer $s_i$ such that $0\leq s_i < p_i^{n_i}$ and $a_is_i+b_i\equiv 0 \text{ (mod } p_i^{n_i})$. Using the fact $p_i's$ are pairwise distinct prime and Chinese remainder theorem, there exist a integer $k$ such that $k \equiv s_i \text{ (mod } p_i^{n_i})$, hence $a_ik+b_i\equiv 0 \text{ (mod } p_i^{n_i})$. Let $n=\prod \limits_{i=1}^m p_i^{n_i}$. Next we prove that $G$ is generated by a matrix $$g=\langle \zeta(1/n),\zeta((n-k)/n),\zeta((k-1)/n)\rangle.$$ Let $k_i=a_j\prod \limits_{j\neq i} p_j^{n_j}$, then $k_i$ is coprime with $p_i$ for $i=1,2,\cdots m$. Then we have $g[1,1]^{k_i}=g_i[1,1]$ (as we have mentioned above $g[a,b]$ (resp. $g_i[a,b]$) means the $(a,b)$-entry of $g$ (resp. $g_i$)). And since $$g[2,2]=g[1,1]^{-k}, \quad \quad g_i[2,2]=g_i[1,1]^{-k}$$ $$g[3,3]=g[1,1]^{-1}g[2,2]^{-1}, \quad \quad g_i[3,3]=g_i[1,1]^{-1}g_i[2,2]^{-1},$$
we have $g[2,2]^{k_i}=g_i[2,2]$ and $g[3,3]^{k_i}=g_i[3,3]$, which follows that $g^{k_i}=g_i$. Since $k_i$ is coprime with $p_i$ for each $i$, then the greatest common divisor of $n,k_1,k_2,\cdots,k_n$ is 1, thus there exist $t_i$ such that $t_1k_1+t_2k_2+\cdots t_mk_m \equiv 1$ (mod $n$). Hence $\prod g_i^{t_i}=\prod g^{t_ik_i}=g$.  (because $g^n=1$). Hence $G$ is generated by the matrix $g$. 

Finally we only need to prove $n-k$ and $k-1$ is coprime with $n$. If $n-k$ is not coprime with $n$, then there exists $0<r<n$ such that $n \mid (n-k)r$. Then $g^r$ has eigenvalue 1 but $g^r$ is not the unit matrix, which leads to contradiction. Similarly we can prove that $k-1$ is coprime with $n$ and the theorem is proved.
\end{proof} 
\textbf{Minimal generators of the invariant ring and their relations}\label{sec3}:
Denote by $H_{n,p}$  the subgroup of $SL(3,\Bbb C)$ generated by the matrix $$ g_{n,p}=\langle \zeta(1/n),\zeta(p/n),\zeta((n-p-1)/n)\rangle$$ where $p$ and $n-p-1$ are coprime with $p$. By Theorem \ref{the1} we know that $\Bbb C^3/G$ defines a three-dimensional Gorenstein isolated singularity. A polynomial $f \in \Bbb C[x,y,z]$ is an invariant polynomial of $H_{n,p}$ if each term $x^ay^b z^c$ in $f$ satisfies
$$a+pb+(n-p-1)c\equiv 0 \text{ (mod } n).$$
Denote by $S_{n,p}$ the subalgebra of $\Bbb C[x,y,z]$ that consists of all invariants of $H_{n,p}$. Then $\Bbb C^3/H_{n,p}$ is isomorphic to the algebraic variety Spec$(S_{n,p})$. If $\{f_1,\dots,f_k\}$ is a minimal set of homogeneous polynomials such that $S_{n,p}$ is generated by $f_1,\cdots,f_k$ as a $\Bbb C-$algebra, then we call $f_i's$ minimal generators of $S_{n,p}$. Then the minimal embedding dimension of  $\Bbb C^3/H_{n,p}$ is equal to  the number of  minimal generators of $S_{n,p}$.

Let $f_1,\dots,f_k$ be minimal generators of $S_{n,p}$. Consider the ring homomorphism
\begin{align*}
\phi:\Bbb C[y_1,\dots,y_k] &\rightarrow S_{n,p}\\
y_i\quad &\mapsto  f_i
\end{align*}
Let $K_{n,p}$ be the kernel of $\phi$, the generators of $K_{n,p}$ (as an ideal of $\Bbb C[y_1,\dots,y_k]$) are called  relations of $f_1,f_2,\dots,f_k$. In this section we will determine a set of minimal generators of $S_{n,p}$ and find out their relations for all $n,p$ such that $p$ and $n-p-1$ are coprime with $n$.

First let's recall a result of Riemenschneider  \cite{R} about two-dimensional cyclic quotient singularities.

\begin{thm}\label{the2}(\cite{R})
Let $G=G_{n,p}$ be the subgroup of $SL(2,\Bbb C)$, generated by $\begin{pmatrix}
\zeta(1/n)&0\\
0&\zeta(p/n)
\end{pmatrix}$. The continue fraction of $n/(n-p)$ is $[a_1,a_2,\dots,a_e]$. Then a set of minimal generators of the invariant ring $\Bbb C[u,v]^G$ is $\{f_k=u^{i_k}v^{j_k}\}_{k=0}^{e+1}$, where $i_k,j_k$ are determined as follows:
\begin{equation}\label{eq1}
\begin{aligned}
&i_{0}=n,\quad i_{1}=n-p, \quad i_{k+1}=a_ki_k-i_{k-1}\quad \text{ for } 1\leq k\leq e\\
&j_0=0,\quad j_1=1, \quad \quad \quad j_{k+1}=a_kj_k-j_{k-1}\quad \text{ for } 1\leq k\leq e
\end{aligned}
\end{equation}
The relations of $\Bbb C[u,v]^G$ are 
\begin{align}\label{eq2}
f_{i-1}f_{j+1}=f_if_j\prod \limits_{k=i}^j f_k^{a_k-2}
\end{align}
for $0<i<j<e+1$.
\end{thm}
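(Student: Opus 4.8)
The plan is to translate the statement into the combinatorics of the semigroup of invariant monomials and then read off both the generators and the relations from the Hirzebruch--Jung data encoded in \eqref{eq1}. A monomial $u^av^b$ is fixed by $G$ exactly when $a+pb\equiv 0\pmod n$, so $\Bbb C[u,v]^G$ is the semigroup algebra $\Bbb C[S]$ of $S=\{(a,b)\in\Bbb Z_{\ge 0}^2:\ a+pb\equiv 0\pmod n\}$. Writing $L=\{(a,b)\in\Bbb Z^2:\ a+pb\equiv 0\pmod n\}$ for the ambient lattice, the map $(a,b)\mapsto a+pb$ surjects onto $\Bbb Z/n$, so $[\Bbb Z^2:L]=n$ and $L$ has covolume $n$. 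Under this dictionary the minimal generators of $\Bbb C[u,v]^G$ are exactly the Hilbert basis of $S$ (the elements not expressible as a sum of two nonzero elements), and the relations are the binomials generating the toric kernel $K$. Thus everything reduces to matching the Hilbert basis of $S$ with \eqref{eq1} and the generators of $K$ with \eqref{eq2}.

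For the generators I would first check by induction, using $i_{k+1}=a_ki_k-i_{k-1}$ and $j_{k+1}=a_kj_k-j_{k-1}$, that each $(i_k,j_k)\in L$ (the congruence $i+pj\equiv 0$ is preserved by the recursion and holds at $k=0,1$), with endpoints $f_0=u^n$ and $f_{e+1}=v^n$, the latter being the standard terminal property of the continued fraction of $n/(n-p)$. The decisive computation is that $i_kj_{k+1}-i_{k+1}j_k$ is independent of $k$: substituting the recursions gives $i_kj_{k+1}-i_{k+1}j_k=i_{k-1}j_k-i_kj_{k-1}$, which telescopes back to $i_0j_1-i_1j_0=n$. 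Hence each consecutive pair $(i_k,j_k),(i_{k+1},j_{k+1})$ spans a sublattice of covolume $n$, i.e.\ is a $\Bbb Z$-basis of $L$. Since all $a_k\ge 2$ in the Hirzebruch--Jung expansion, the $f_k$ are in convex position and the cones $\mathrm{cone}(f_k,f_{k+1})$ tile the first quadrant, so every $(a,b)\in S$ has a \emph{unique} expression as a nonnegative integer combination $\sum_k m_kf_k$ with the $m_k$ supported on at most two consecutive indices (integrality from the basis property, nonnegativity from membership in the cone). This shows simultaneously that the $f_k$ generate $S$ and, since each $f_k$ lies on the lower boundary of $\mathrm{conv}(S\setminus\{0\})$ and is therefore irreducible, that they are precisely the Hilbert basis; this is exactly \eqref{eq1}.

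For the relations I would first verify that each binomial in \eqref{eq2} is a genuine monomial identity by matching exponents: the recursion gives $(a_k-2)i_k=(i_{k+1}-i_k)-(i_k-i_{k-1})$, and summing from $k=i$ to $j$ telescopes to $i_{i-1}+i_{j+1}=i_i+i_j+\sum_{k=i}^{j}(a_k-2)i_k$, with the identical statement for the $j_k$; together these give $f_{i-1}f_{j+1}=f_if_j\prod_{k=i}^{j}f_k^{a_k-2}$. It then remains to show these binomials generate all of $K$. Here I would pass to a normal form: taking $y_{i-1}y_{j+1}$ as the leading term of the relation indexed by $(i,j)$, the standard monomials are exactly those whose support lies in two consecutive indices, namely $y_k^m$ and $y_k^ay_{k+1}^b$. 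The unique conical representation from the previous step gives a bijection between these standard monomials and the elements of $S$, so they map to a $\Bbb C$-basis of $\Bbb C[S]=\Bbb C[y]/K$. Since $J\subseteq K$ already reduces every monomial to a combination of standard ones, the standard monomials span $\Bbb C[y]/J$, and being linearly independent modulo $K\supseteq J$ they form a basis of $\Bbb C[y]/J$ as well; hence the surjection $\Bbb C[y]/J\twoheadrightarrow \Bbb C[y]/K$ is an isomorphism and $J=K$.

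The main obstacle is this last step: proving that \eqref{eq2} actually generates $K$ rather than merely lying in it. Concretely one must show the rewriting rules $y_{i-1}y_{j+1}\mapsto y_iy_j\prod_{k=i}^{j}y_k^{a_k-2}$ terminate and are confluent, so that every monomial reduces to the \emph{unique} standard representative of its image in $S$. Because the two sides of each relation have equal weighted degree but differ in total $y$-degree, naive total degree will not order them, so termination needs a carefully chosen term order (or a direct induction on the spread of the support). Once the determinant identity and the uniqueness of the conical representation are in hand, both the generation of $S$ and the injectivity built into the normal form follow cleanly; that single linear-algebra fact, $i_kj_{k+1}-i_{k+1}j_k=n$, is really the engine of the whole argument.
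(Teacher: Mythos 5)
The paper never proves this statement: Theorem \ref{the2} is quoted as a known result of Riemenschneider (the citation should in fact point to [Ri] rather than [R], which is Reid's survey), and it is used downstream as a black box. So there is no in-paper proof to compare yours against; I can only assess your argument on its own terms.

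Your toric/semigroup route is the standard modern proof strategy, and most of it is correct and complete: the identification $\Bbb C[u,v]^G=\Bbb C[S]$ for the affine semigroup $S=L\cap\Bbb Z^2_{\ge 0}$, the telescoping determinant identity $i_kj_{k+1}-i_{k+1}j_k=n$, the consequence that consecutive pairs $(i_k,j_k),(i_{k+1},j_{k+1})$ are $\Bbb Z$-bases of $L$ whose cones tile the quadrant (whence the unique representation of any element of $S$ supported on two consecutive generators), and the exponent bookkeeping showing each binomial of \eqref{eq2} lies in the toric ideal $K$. Two repairs are needed. First, the genuine gap, which you flag but do not close, is termination of the rewriting used to show the binomials generate $K$. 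Note that confluence is not needed: as you yourself observe, linear independence of the standard monomials inside $\Bbb C[S]$ substitutes for it. For termination, here is a concrete measure: every rewrite preserves the weighted degree, so the total degree $\sum_k m_k$ is bounded along any reduction; a rewrite whose window $[i,j]$ contains some $a_k>2$ strictly increases $\sum_k m_k$ (the change is $\sum_{k=i}^{j}(a_k-2)$), while a rewrite with all $a_k=2$ preserves it and changes $\sum_k k^2m_k$ by $i^2+j^2-(i-1)^2-(j+1)^2=2(i-j-1)<0$. Hence the pair $\bigl(-\sum_k m_k,\ \sum_k k^2m_k\bigr)$ strictly decreases lexicographically at each step, and no infinite reduction chain exists. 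Second, your normal-form argument silently uses the relations with $i=j$, namely $f_{i-1}f_{i+1}=f_i^{a_i}$; without them a monomial such as $y_{i-1}y_{i+1}$ cannot be reduced at all. The statement as printed restricts to $0<i<j<e+1$, under which the claim is actually false (for $n=3$, $p=1$ only one of the three relations of Example \ref{ex1} would survive); the correct range, and the one your proof requires, is $0<i\le j<e+1$. With these two points settled, your argument is a correct proof.
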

\begin{rem}
In fact  $i_{e+1}=0$ and $j_{e+1}=n$. So $f_0=u^n$ and $f_{e+1}=v^n$
\end{rem}
Let's see a example.
\begin{exa}\label{ex1}
Let $G=G_{3,1}$, then $3/(3-1)=[2,2]$ and $e=2$. We have
$$i_{0}=3,\quad i_{1}=2, \quad i_2=2i_1-i_0=1 \quad i_3=2i_2-i_1=0$$ 
$$j_0=0,\quad j_1=1, \quad j_2=2j_1-j_0=2 \quad j_3=2j_2-j_1=3$$ 
Thus $\Bbb C[u,v]^{G}$ is generated by 
$$\{f_0=u^3,f_1=u^2v,f_2=uv^2,f_3=v^3\}$$
And the relations are
$$\{f_0f_2=f_1^2,\quad f_0f_3=f_1f_2, \quad f_1f_3=f_2^2\}.$$
\end{exa}

Now come back to the three-dimensional case. Consider the subring $S_{n,p}\cap \Bbb C[x,y]$ of $S_{n,p}$. Since $S_{n,p}\cap \Bbb C[x,y]$ consists of all monomial $x^ay^b$ such that $a+p b\equiv 0 \text{ (mod } n)$, we have   $S_{n,p}\cap \Bbb C[x,y]=\Bbb C[x,y]^{G_{n,p}}$, where $G_{n,p}$ is the subgroup of $SL(2,\Bbb C)$ which is generated by $\langle \zeta(1/n),\zeta(p/n)\rangle$. Using Theorem \ref{the2}, we know that $S_{n,p}\cap \Bbb C[x,y]$ is generated by $\{f_{1,k}=x^{i_{1,k}}y^{j_{1,k}}\}_{k=0}^{e_1+1}$, where $e_1$ is the length of the continue fraction $n/(n-p)$, and $i_{1,k}$,$j_{1,k}$ is defined as equations (\ref{eq1}) in Theorem \ref{the2}. And the relations of $\{f_{1,k}=x^{i_{1,k}}y^{j_{1,k}}\}_{k=0}^{e_1+1}$ are 
\begin{align}\label{eq4}
f_{1,i-1}f_{1,j+1}=f_{1,i}f_{1,j}\prod \limits_{k=i}^j f_{1,k}^{a_{1,k}-2},
\end{align}
for $0<i<j<e_1+1$, where $[a_{1,1},a_{1,2},\dots,a_{1,e_1}]$ is the continue fraction of $n/(n-p)$. Denote the set $\{f_{1,k}=x^{i_{1,k}}y^{j_{1,k}}\}_{k=1}^{e_1}$ by $A_{xy}(n,p)$, then $S_{n,p}\cap \Bbb C[x,y]$ is generated by $A_{xy}(n,p)\cup\{x^n,y^n\}$. And we denote the set of  relations (\ref{eq4}) by $R_{xy}(n,p)$. Similarly,  $S_{n,p}\cap \Bbb C[x,z]=\Bbb C[x,z]^{G_{n,n-p-1}}$, and we denote the set of its minimal generators by $\{x^n,z^n\}\cup A_{xz}(n,n-p-1)=\{x^n,z^n,f_{2,1}=x^{i_{2,1}}z^{j_{2,1}},f_{2,2}=x^{i_{2,2}}z^{j_{2,2}},\dots,f_{2,e_2}=x^{i_{2,e_2}}z^{j_{2,e_2}}\}$ and denote the set  of relations  by $R_{xz}(n,n-p-1)$ . Next we consider $S_{n,p}\cap \Bbb C[y,z]$. Obviously  $S_{n,p}\cap \Bbb C[y,z]=\Bbb C[y,z]^{G}$ where $G$ is the subgroup of $SL(2,\Bbb C)$ generated by $\langle \zeta(p),\zeta(n-p-1) \rangle$. Since $p$ is coprime with $n$, there exist $q$ such that $pq \equiv 1$ (mod $n$) and $q$ is coprime with $n$. We have $q(n-p-1) \equiv r$ (mod $n$) for some positive integer $r$ less than $n$. Hence
$$\langle \zeta(p/n),\zeta((n-p-1)/n) \rangle^q=\langle \zeta(1/n),\zeta(r/n) \rangle$$
and $$\langle \zeta(p/n),\zeta((n-p-1)/n) \rangle=\langle \zeta(1/n),\zeta(r/n) \rangle^p.$$
Hence $G$ is generated by $\langle \zeta(1/n),\zeta(r/n)\rangle$. As before, we denote the set of minimal generator of $\Bbb C[y,z]^{G_{n,r}}$ by $\{y^n,z^n\} \cup A_{yz}(n,r)=\{y^n,z^n,f_{3,1}=y^{i_{3,1}}z^{j_{3,1}},f_{3,2}=y^{i_{3,2}}z^{j_{3,2}},\dots,f_{3,e_3}=y^{i_{3,e_3}}z^{j_{3,e_3}}\}$ and the set of their relations by $R_{yz}(n,r)$. Obviously $xyz \in S_{n,p}$, and our following theorem will prove that $\{g_1=x^n,g_2=y^n,g_3=z^n,g_4=xyz\} \cup A_{xy}(n,p)\cup A_{xz}(n,n-p-1) \cup A_{yz}(n,r)$ is a set of minimal generators of $S_{n,p}$. These  generators (exclude $g_4$) form a triangle as the following picture

\begin{equation}\label{ma1}
\begin{matrix}
g_1& & & & \\
f_{1,1}&f_{2,1} & & & \\
f_{1,2}& &f_{2,2} & &\\
\cdots & & &\cdots&  &\\
f_{1,e_1}& & & &f_{2,e_2} & \\
g_2&f_{3,1} &f_{3,2}& \cdots&f_{3,e_3}& g_3 
\end{matrix}
\end{equation}

We call $\{g_1,f_{1,1},f_{1,2},\dots,f_{1,e_1},g_2\},\{g_1,f_{2,1},\dots,f_{2,e_2},g_3\}$ and $\{g_2,f_{3,1},\dots,f_{3,e_3},g_3\}$ the first, second and third side of the triangle (\ref{ma1}) respectively. Relations of generators which lie on the same side of the above triangle have been known, now we need to explore relations of generators which are on different sides. Obverse that if we take two generators $f$ and $g$ which lie on different sides, for  example $g=g_1$ and $f=f_{3,1}$, then $g_4=xyz \mid fg$. Hence we  introduce the definition "basic form" of a element in $S_{n,p}$. For any monomial $h=x^ay^bz^c \in S_{n,p}$, without loss of generality, we may assume that $c=\min\{a,b,c\}$. Since $g_4=xyz \in S_{n,p}$, we have $x^{a-c}y^{b-c} \in S_{n,p} \cap \Bbb C[x,y]$, which follows that $x^{a-c}y^{b-c}$ can be generated by $\{g_1=x^n,g_2=y^n\}\cup A_{xy}(n,p)$. Hence $h=g_4^c \widetilde{h}(g_1,g_2,f_{1,1},\dots,f_{1,e_1})$ in $\Bbb C[x,y,z]$, where $\widetilde{h}(g_1,g_2,f_{1,1},\dots,f_{1,e_1})$ is a polynomial in $g_1,g_2,f_{1,1},\dots,f_{1,e_1}$. We call $g_4^c \widetilde{h}(g_1,g_2,f_{1,1},\dots,f_{1,e_1})$ a basic form of $h$, and denote it by $B(h)$. Similarly in other two cases ($a=\min\{a,b,c\}$ and $b=\min\{a,b,c\}$) we can define $B(h)$. 
Let's see an example for basic forms.
\begin{exa}\label{ex2}
Let $n=3$ and $p=1$. Then $S_{n,p}\cap \Bbb C[x,y]=\Bbb C[x,y]^{G_{3,1}}$, which is generated by $\{g_1=x^3,g_2=y^3\} \cup A_{xy}(3,1)$.
 From Example \ref{ex1} we know that 
$$A_{xy}(3,1)=\{f_{1,1}=x^2y,f_{1,2}=xy^2\}.$$
and 
$$R_{xy}(3,1)=\{g_1f_{1,2}=f_{1,1}^2,\quad g_1g_2=f_{1,1}f_{1,2}\quad f_{1,1}g_2=f_{1,2}^2\}.$$
Let $f=x^4y^4z \in S_{n,p}$, then $f=g_4\cdot x^3y^3$.
$x^3y^3 \in C[x,y]^{G_{3,1}}$ and it can be written as $f_{1,1}f_{1,2}$. Hence $B(f)=g_4f_{1,1}f_{1,2}$ is basic form of $f$.
\end{exa}
Now we can prove the main theorem of this section.
\begin{thm}\label{the3}
Using the notation above, then a set of minimal generators of the invariant ring $S_{n,p}$ is 
\begin{equation}\label{eq5}
\begin{aligned}
\{g_1=x^n,g_2=y^n,&g_3=z^n,g_4=xyz\}\cup \\
&A_{xy}(n,p)\cup A_{xz}(n,n-p-1)\cup A_{yz}(n,r).
\end{aligned}
\end{equation}
And the relations are 
$$R_{xy}(n,p)\cup R_{xz}(n,n-p-1)\cup R_{yz}(n,r)\cup $$
$$\{gf-B(gf)\mid \text{ generators } g,f \text{ do not lie on the same side of triangle (\ref{ma1}) }\}$$
where $B(gf)$ is a basic form of $gf$.
More explicitly, the relations are 
\begin{equation}\label{eq6}
\begin{aligned}
&R_{xy}(n,p)\cup R_{xz}(n,n-p-1)\cup R_{yz}(n,r)\cup \\
&\{g_1f-B(g_1f)\mid f \in A_{yz}(n,r)\} \cup \{g_2f-B(g_2f)\mid f \in A_{xz}(n,n-p-1)\}\cup\\
& \{g_3f-B(g_3f)\mid f \in A_{xy}(n,p)\} \cup\\
&\{fg-B(fg)\mid f\in A_{xy}(n,p),\quad g \in A_{xz}(n,n-p-1)\}\cup\\
&\{fg-B(fg)\mid f\in A_{xy}(n,p),\quad g \in A_{yz}(n,r)\}\cup \\
&\{fg-B(fg)\mid f\in A_{x z}(n,n-p-1),\quad g \in A_{y z}(n,r)\}.
\end{aligned}
\end{equation}
\end{thm}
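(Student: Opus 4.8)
The plan is to regard $S_{n,p}$ as the semigroup ring of the additive semigroup of exponent vectors $(a,b,c)\in\Z_{\ge0}^3$ with $a+pb+(n-p-1)c\equiv0\pmod n$, so that every relevant element is a monomial in $x,y,z$ and the statement becomes combinatorial. The generation claim is essentially the basic–form discussion preceding the theorem: an arbitrary invariant monomial $h=x^ay^bz^c$ with, say, $c=\min\{a,b,c\}$ equals $g_4^c$ times a monomial of $\Bbb C[x,y]^{G_{n,p}}$, which by Theorem \ref{the2} lies in the subalgebra generated by $\{g_1,g_2\}\cup A_{xy}(n,p)$; hence (\ref{eq5}) generates $S_{n,p}$. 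For minimality I would use the fine $\Z_{\ge0}^3$–grading, under which a monomial is a minimal generator exactly when it does not factor as a product of two nontrivial invariant monomials. Each $g_i=x^n,y^n,z^n$ admits no such factorization; for $g_4=xyz$ one checks that none of $xy,yz,xz$ is invariant (their weights $1+p$, $n-1$, $n-p$ are all $\not\equiv0\pmod n$ since $n\ge2$ and $p,\,n-p-1$ are prime to $n$), so $g_4$ is indecomposable; and each interior generator involves only two variables, so any invariant factorization would occur inside the corresponding two–variable ring, contradicting its minimality in Theorem \ref{the2}.

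The substance of the theorem is the relation statement. Here I would first note that $K_{n,p}$, being the kernel of a map onto a monomial ring, is spanned by binomials $M-M'$ with $\phi(M)=\phi(M')$, so it suffices to prove the single \textbf{reduction lemma}: modulo the ideal $I$ generated by the relations in (\ref{eq6}), every monomial $M$ in the generators is congruent to the basic form $B(\phi(M))$. Granting this, any binomial $M-M'$ with $\phi(M)=\phi(M')=h$ satisfies $M-M'\equiv B(h)-B(h)=0\pmod I$, whence $K_{n,p}\subseteq I$; the reverse inclusion is the routine verification that each listed relation is an identity of monomials in $x,y,z$.

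I would prove the reduction lemma by induction on
\[
\mu(M)=\min\{a,b,c\}-\deg_{g_4}(M),\qquad (a,b,c)=\text{exponent vector of }\phi(M),
\]
which is $\ge0$ because $\phi(M)$ is divisible by $(xyz)^{\deg_{g_4}(M)}$. If $\mu(M)=0$, then after factoring out $g_4^{\min}$ the remaining generators contribute nothing to the minimal coordinate, hence all lie in a single coordinate plane; the relations $R_{xy},R_{xz},R_{yz}$ together with Theorem \ref{the2} rewrite that single–plane word into its two–variable normal form, which is exactly $B(\phi(M))$. If $\mu(M)>0$, then $M$ must contain two generators $f,g$ lying on different sides of the triangle (\ref{ma1}); the product $\phi(f)\phi(g)$ then involves all three variables, so the corresponding cross relation $fg-B(fg)$ of (\ref{eq6}) applies, and replacing $fg$ by $B(fg)=g_4^{s}(\cdots)$ with $s\ge1$ strictly raises $\deg_{g_4}$ while leaving $\phi(M)$ fixed, so $\mu$ strictly decreases and the inductive hypothesis finishes the reduction.

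I expect the main obstacle to be the bookkeeping inside this inductive step. One must check that the six families of cross relations in (\ref{eq6}) really cover every pair of generators that fail to share a side, namely the three interior–interior pairs and the three corner–interior pairs $g_1A_{yz}$, $g_2A_{xz}$, $g_3A_{xy}$ (while the corner–corner and $g_i$–with–same–side–interior pairs are already absorbed into $R_{xy},R_{xz},R_{yz}$); that a cross product always contains all three variables, so that $s\ge1$; and that the basic form is well defined even when $\min\{a,b,c\}$ is attained twice, in which case the surviving factor is forced to be a pure power of a single $g_i$ and the two competing conventions agree. These are exactly the points where the continued–fraction data of the three two–dimensional rings $\Bbb C[x,y]^{G_{n,p}}$, $\Bbb C[x,z]^{G_{n,n-p-1}}$, $\Bbb C[y,z]^{G_{n,r}}$ must be shown to glue consistently along the shared corners $g_1,g_2,g_3$, and verifying that compatibility is the crux.
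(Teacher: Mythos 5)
Your overall strategy is essentially the paper's: the paper's Claim \ref{cla1} is exactly your reduction lemma, and its proof, like yours, uses cross relations to push up the power of $g_4$ and Theorem \ref{the2} to normalize words supported on a single side of the triangle (\ref{ma1}). Your packaging is somewhat cleaner — exploiting that $K_{n,p}$ is a binomial (toric) ideal reduces everything to the statement that each monomial is congruent mod the ideal $I$ generated by (\ref{eq6}) to a basic form, and your minimality argument via the fine $\Bbb Z_{\ge 0}^3$-grading is more rigorous than the paper's — but the core mechanism is the same.

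However, the inductive step of your reduction lemma has a genuine gap. You assert that $\mu(M)>0$ forces $M$ to contain two generators lying on different sides of (\ref{ma1}). This is false: take $M=g_1g_2g_3$, or any monomial whose $g_4$-free part is supported exactly on the three vertices. Then $\phi(M)=x^ny^nz^n$, so $\mu(M)=n>0$, yet every pair among $g_1,g_2,g_3$ does share a side, so no cross relation in (\ref{eq6}) applies, and the base case does not apply either; as written, your induction cannot even derive the kernel element $g_1g_2g_3-g_4^n$. Your remark that ``corner--corner pairs are absorbed into $R_{xy},R_{xz},R_{yz}$'' explains why no cross relation is listed for such pairs, but it does not repair the induction: the side relation $g_1g_2=f_{1,1}f_{1,e_1}\prod_k f_{1,k}^{a_{1,k}-2}$ changes neither $\phi(M)$ nor $\deg_{g_4}(M)$, hence leaves $\mu$ fixed, so invoking it is not a legitimate step of an induction on $\mu$ alone. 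The paper isolates precisely this configuration as a separate case (case (b) in the proof of Claim \ref{cla1}): first replace $g_1g_2$ by the interior word via the side relation, which creates a pair not sharing a side (an interior $xy$-generator together with $g_3$), and only then apply a cross relation, which raises $\deg_{g_4}$ by at least one. With this two-step move your measure $\mu$ does strictly decrease (since cross products involve all three variables, $s\ge1$), so your induction closes; alternatively one can induct on weighted degree as the paper does, treating the vertex-only case by direct reduction to the cross-pair case. Adding this missing case makes your argument complete.
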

\begin{rem}
It's easy to see that $\deg(gf)=\deg(B(gf))$ with respect to the weight system (\ref{weight}) for any $f,g\in \{g_1,\dots,g_4,f_{1,1},\dots, f_{1,e_1},f_{2,1},\dots,f_{2,e_2},f_{3,1},\dots,f_{3,e_3}\}$. Hence equations in (\ref{eq6}) are weighted homogeneous.
\end{rem}
\begin{proof}
For any element $f \in S_{n,p}$, from its basic form $B(f)$, we know that $f$ can be generated by $(\ref{eq5})$. Hence $(\ref{eq5})$ generate $S_{n,p}$. Theorem \ref{the2} tells us that $A_{xy}(n,p)\cup \{x^n,y^n\}$ are minimal generators of $S_{n,p}\cap \Bbb C[x,y]$, hence each element in $A_{xy}(n,p)\cup \{x^n,y^n\}$ can not be generated by other elements in $A_{xy}(n,p)\cup \{x^n,y^n\}$. Similarly each element in $A_{xz}(n,n-p-1)\cup \{x^n,z^n\}$ (resp. $A_{yz}(n,r)\cup \{y^n,z^n\}$) can not be generated by other elements in $A_{xz}(n,n-p-1)\cup \{x^n,z^n\}$ (resp. $A_{yz}(n,r)\cup \{y^n,z^n\}$). And it's clear that $xyz$ can not be generated by other elements in $(\ref{eq5})$. Hence $(\ref{eq5})$ are minimal generators.

Consider ring homomorphism
$$\phi:\Bbb C[g_1,\dots,g_4,f_{1,1},\dots, f_{1,e_1},f_{2,1},\dots,f_{2,e_2},f_{3,1},\dots,f_{3,e_3}]\rightarrow S_{n,p}$$
$$g_1\mapsto x^n \quad g_2\mapsto y^n \quad g_3\mapsto z^n \quad g_4\mapsto xyz$$
$$f_{1,k}\mapsto x^{i_{1,k}}y^{j_{1,k}} \quad 
f_{2,k}\mapsto x^{i_{2,k}}y^{j_{2,k}} \quad
f_{3,k}\mapsto x^{i_{3,k}}y^{j_{3,k}}$$
Denote the kernel of $\phi$ by $K_{n,p}$. We will prove that $K_{n,p}$ is generated by $(\ref{eq6})$ as an ideal of $\Bbb C[g_1,\dots,g_4,f_{1,1},\dots, f_{1,e_1},f_{2,1},\dots,f_{2,e_2},f_{3,1},\dots,f_{3,e_3}]$.

First let's prove a claim.
\begin{cla}\label{cla1}
Let $P=\Bbb C[g_1,\dots,g_4,f_{1,1},\dots, f_{1,e_1},f_{2,1},\dots,f_{2,e_2},f_{3,1},\dots,f_{3,e_3}]$.
For any monomial $F$ in $P$, there exists a non-negative integer $k$ and a monomial $H$ in $P$ such that 

(1) $F-g_4^kH$ is generated by (\ref{eq6}); 

(2) $H$ is independent of $g_4$;

(3) elements in Supp$(H)$ lie on a side of trangle (\ref{ma1}) and Supp$(H)$ contains at most one vertex of that side. (here Supp$(H)$ means the set consists of variables which appear in $H$).
More explicitly, this condition requires that $H$ satisfies one of the following conditions:

(i) Supp$(H)\subseteq \{g_1,f_{1,1},\dots,f_{1,e_1}\}$;

(ii) Supp$(H)\subseteq \{g_1,f_{2,1},\dots,f_{2,e_2}\}$;

(iii) Supp$(H)\subseteq \{g_2,f_{1,1},\dots,f_{1,e_1}\}$;

(iv) Supp$(H)\subseteq \{g_2,f_{3,1},\dots,f_{3,e_3}\}$;

(v) Supp$(H)\subseteq \{g_3,f_{2,1},\dots,f_{2,e_2}\}$;

(vi) Supp$(H)\subseteq \{g_3, f_{3,1},\dots,f_{3,e_3}\}$.
\end{cla}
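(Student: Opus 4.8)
The plan is to prove Claim \ref{cla1} by exhibiting, for each monomial $F\in P$, an explicit sequence of reductions modulo the binomial relations (\ref{eq6}) that terminates in a monomial of the form $g_4^kH$ satisfying one of (i)--(vi). Every relation in (\ref{eq6}) is a difference of two monomials: each basic form $B(gf)$ is a single monomial, obtained by factoring out the maximal power of $g_4=xyz$ dividing $\phi(gf)$ and writing the remainder in its consecutive normal form on one side (as supplied by Theorem \ref{the2}). Hence each relation may be read as a rewriting rule $gf\to B(gf)$, and rewriting a monomial by it again yields a monomial. Since $F-F'$ is always a monomial multiple of a generator of (\ref{eq6}) whenever $F\to F'$, the difference $F-g_4^kH$ produced at the end lies in the ideal generated by (\ref{eq6}), which is precisely assertion (1). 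I use throughout that $\phi$ is homogeneous, so the total degree $d(F):=\deg\phi(F)$ (the degree of the image monomial in $x,y,z$) is preserved by every step.

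The key quantitative input is the behaviour of the $g_4$-exponent. I record the measure $\mu_1(F):=d(F)-3\deg_{g_4}(F)\ge 0$, the $\phi$-degree of the part of $F$ prime to $g_4$. The central observation is that every cross-side relation of (\ref{eq6}) strictly raises $\deg_{g_4}$, hence strictly lowers $\mu_1$. Indeed each interior generator $f_{s,k}$ involves both of its two variables with positive exponent, since $0<i_{s,k}$ and $0<j_{s,k}$ in Theorem \ref{the2}; therefore a product $gf$ of two generators lying on different sides of the triangle (\ref{ma1}), or a product of a vertex with an interior generator of the opposite side, necessarily contains all three of $x,y,z$ with positive exponent. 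Writing $\phi(gf)=x^ay^bz^c$, its basic form is thus $B(gf)=g_4^{m}\tilde h$ with $m=\min\{a,b,c\}\ge 1$, so the rule $gf\to B(gf)$ increases the power of $g_4$ by at least one. As $\mu_1$ is a nonnegative integer left unchanged by all the remaining (vertex and same-side) rules, only finitely many cross-side steps can ever occur.

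With this in hand I would argue by induction on the pair $(\mu_1(F),\mu_2(F))$ ordered lexicographically, where $\mu_2(F):=\deg_{g_1}(F)+\deg_{g_2}(F)+\deg_{g_3}(F)$ is the total multiplicity of the three vertices. If $F$ already satisfies one of (i)--(vi) after extracting $g_4^{\deg_{g_4}(F)}$, we are done. Otherwise some rule strictly decreases $(\mu_1,\mu_2)$: if the interior generators occurring in $F$ lie on two different sides, or lie on one side while the opposite vertex is present, a cross-side relation applies and lowers $\mu_1$ by the previous paragraph; if instead all interior generators lie on a single side with the opposite vertex absent, the only obstruction to (i)--(vi) is the presence of two of the three vertices, and any two vertices are the endpoints of one of the vertex--vertex relations $g_1g_2,\,g_1g_3,\,g_2g_3$ contained in $R_{xy},R_{xz},R_{yz}$ (the cases $(i,j)=(1,e_s)$ of (\ref{eq2})). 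Applying such a relation rewrites the vertex pair into interior generators of that same side, leaving $\mu_1$ fixed and lowering $\mu_2$ by $2$. Since $(\mu_1,\mu_2)$ ranges in a well-ordered set, the process terminates, and it can only terminate when $F$ has reached the required form; this establishes (1)--(3) at once.

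The step I expect to require the most care is checking that the stopping configurations are exactly the admissible ones, i.e. that ``no cross-side relation and no vertex--vertex relation applies'' forces one of (i)--(vi); this is bookkeeping about which generators share a side in (\ref{ma1}), verifying in particular that when all interiors sit on one side with the opposite vertex absent and at most one vertex surviving, that vertex is an endpoint of the side, so $\mathrm{Supp}(H)$ falls into the correct list. A related subtlety, worth isolating as a preliminary remark, is that breaking a vertex pair may create interior generators which then combine with a third vertex to re-enable a cross-side relation (for instance $F=g_1g_2g_3$ rewrites first to an interior monomial times $g_3$, and then, via a $g_3f_{1,k}$ relation, raises $\deg_{g_4}$); the lexicographic order is arranged precisely so that such cascades are accounted for, each cross-side step being charged to a strict drop in $\mu_1$. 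Finally one should confirm that $B(gf)$ is genuinely a single monomial, which is exactly what the consecutive normal form of Theorem \ref{the2} guarantees, so that all members of (\ref{eq6}) are binomials and the rewriting calculus above is legitimate.
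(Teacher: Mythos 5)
Your proposal is correct and follows essentially the same route as the paper: you use the cross-side relations $gf-B(gf)$ to pull out a factor of $g_4$ (your observation that $B(gf)$ is divisible by $g_4$ because a cross-side product involves all of $x,y,z$ positively is exactly the paper's case (a)), and you use the endpoint relation $g_1g_2=f_{1,1}f_{1,e_1}\prod_k f_{1,k}^{a_{1,k}-2}$ from the two-variable theory to eliminate vertex pairs, exactly as in the paper's cases (b) and (c). The only difference is bookkeeping: your lexicographic measure $(\mu_1,\mu_2)$, with $\mu_1$ the weighted degree of the $g_4$-free part, plays the role of the paper's induction on weighted degree (applied after stripping $g_4$), and your ``cascade'' remark about $g_1g_2g_3$ reproduces the paper's reduction of case (b) to case (a).
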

\begin{proof}[Proof of Claim \ref{cla1}]
We prove this claim by induction on the weighted degree of $F$ (with respect to the weight system (\ref{weight})). Without loss of generality, we may assume that $F$ is independent of $g_4$ (if $F=g_4^kF'$, we can replace $F$ by $F'$). There are following three cases:

(a) There exist $g,f\in Supp(F)$ such that $f,g$ do not lie on the same side of the triangle (\ref{ma1}), then $F$ can be written as $$F=gfF'=B(gf)F'+(gf-B(gf))F'.$$ Because $\deg(gf)=\deg(B(gf))$ we have $\deg(F)=\deg(B(gf)F')$. Since $(gf-B(gf))F'$ is generated by (\ref{eq6}), we only need prove the claim for $B(gf)F'$. By the definition of $B(gf)$, we know that $g_4\mid B(gf)$. Hence $B(gf)F'$ can be written as $g_4F''$, then $\deg F''< \deg B(gf)F'=\deg F$. By inductive assumption, we know the claim holds for $F''$, which follows that the claim holds for $g_4F''=B(gf)F'$.

(b) $g_1g_2g_3 \mid F$. Write $F=g_1g_2g_3F'$. Since $g_1g_2=f_{1,1}f_{1,e_1}\prod_{k=1}^{e_1} f_{1,k}^{a_{1,k}-2} \in R_{x,y}(n,p)$, we only need to prove the claim for 
$f_{1,1}f_{1,e_1}\prod_{k=1}^{e_1} f_{1,k}^{a_{1,k}-2}g_3F'$, and this has been treated in case (a).

(c) Elements in Supp$(F)$  lie on the same  side of the triangle (\ref{ma1}). Without loss of generality, we may assume that $F$ is a monomial  on variables $g_1,g_2,f_{1,1},\dots,f_{1,e_1}$.
If $g_1g_2 \mid F$,  write $F=g_1^sg_2^tF'$, where $F'$ is independent of $g_1,g_2$ and we may suppose that $s\leq t$. Since $g_1g_2-f_{1,1}f_{1,e_1}\prod_{k=1}^{e_1} f_{1,k}^{a_{1,k}-2} \in R_{xy}(n,p)$, we have $$F-(f_{1,1}f_{1,e_1}\prod_{k=1}^{e_1} f_{1,k}^{a_{1,k}-2})^sg_2^{t-s}F'$$ can be generated by (\ref{eq6}). Let $H=(f_{1,1}f_{1,e_1}\prod_{k=1}^{e_1} f_{1,k}^{a_{1,k}-2})^sg_2^{t-s}F'$, then the claim holds.

\end{proof}

For any $F(g_1,\dots,g_4,f_{1,1},\dots f_{1,e_1},f_{2,1},\dots,f_{2,e_2},f_{3,1},\dots,f_{3,e_3})$ $\in K_{n,p}$, where $F$ is a polynomial in $4+e_1+e_2+e_3$ variables, then $\phi(F)=0$. By Claim \ref{cla1}, we may assume that $F=F_0+g_4F_1+g_4^2F_2+\dots+g_4^mF_m$, where $F_i$ is independent of $g_4$ and  each term of $F_i$ satisfies the condition (3) in Claim \ref{cla1}. Hence $xyz \nmid \phi(F_i)$ unless $\phi(F_i)= 0$.  Since $\phi(F)=0$, then we have 
$$\phi(F_0)+xyz\phi(F_1)+(xyz)^2\phi(F_2)+\dots+(xyz)^m\phi(F_m)=0$$
in $S_{n,p}$. Since $xyz \nmid \phi(F_i)$ unless $\phi(F_i)= 0$,  we have $\phi(F_i)=0$ for $i=0,1,\dots,m$. Now we only need to prove each $F_i$ can be generated by $(\ref{eq6})$. Since each term of $F_i$ satisfies the condition (3) in Claim \ref{cla1} and is independent of $g_4$, we can write $$F_i=H_1+H_2+H_3+c_0+c_1g_1^{k_1}+c_2g_2^{k_2}+c_3g_3^{k_3}$$ where $H_j$ is a polynomial such that each term $t$ in $H_j$ satisfies that 

(1) elements in Supp$(t)$ lie on the $j$-th side of the triangle (\ref{ma1}),

(2) Supp$(t)\cap\{f_{j,1},\dots,f_{j,e_j}\}\neq \emptyset$, 
\\for $j=1,2,3$. Then we have $xy \mid \phi(H_1)$,  $xz\mid \phi(H_2)$ and  $yz \mid \phi(H_3)$ and we have $\phi(H_1)\in \Bbb C[x,y]$, $\phi(H_2)\in \Bbb C[x,z]$, $\phi(H_3)\in \Bbb C[y,z]$.  Since $\phi(F_i)=\phi(H_1)+\phi(H_2)+\phi(H_3)+c_0+c_1x^{k_1n}+c_2y^{k_2n}+c_3z^{k_3n}=0$, and $xy \mid \phi(H_1)$ and $\phi(H_2)\in \Bbb C[x,z]$ and $\phi(H_3)\in \Bbb C[y,z]$, we get $\phi(H_1)=0$. Using Theorem \ref{the1}, we get $H_1$ is generated by $R_{xy}(n,p)$. Similarly $\phi(H_2)=\phi(H_3)=0$ and $H_2$ (resp. $H_3$) can be generated by $R_{xz}(n,n-p-1)$ (resp. $R_{yz}(n,r)$). Hence $c_0+c_1x^{k_1n}+c_2y^{k_2n}+c_3z^{k_3n}=0$, which follows that $c_0=c_1=c_2=c_3=0$. Hence $F_i=H_1+H_2+H_3$ can be generated by (\ref{eq6}), then the theorem is proved.

\end{proof}
\begin{cor}
The minimal embedding dimension $d$ of $\Bbb C^3/H_{n,k}$ is $$4+l(n/(n-p))+l(n/(p+1))+l(n/(n-r))\geq 10,$$
 where $l(k)$ means the length the continue fraction for a positive integer $k$. 
\end{cor}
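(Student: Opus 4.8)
The plan is to obtain the displayed formula for $d$ directly from the generator count furnished by Theorem \ref{the3}, to convert each cardinality into a continued-fraction length using Theorem \ref{the2}, and then to reduce the inequality $d\ge 10$ to the elementary claim that each of the three lengths appearing is at least $2$.

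First I would record the equality. Theorem \ref{the3} exhibits
$$\{g_1,g_2,g_3,g_4\}\cup A_{xy}(n,p)\cup A_{xz}(n,n-p-1)\cup A_{yz}(n,r)$$
as a minimal generating set of $S_{n,p}$, so $d=4+|A_{xy}(n,p)|+|A_{xz}(n,n-p-1)|+|A_{yz}(n,r)|$. By the way these sets were defined, $|A_{xy}(n,p)|=e_1=l(n/(n-p))$; applying Theorem \ref{the2} to the planar groups $G_{n,n-p-1}$ and $G_{n,r}$ gives $|A_{xz}(n,n-p-1)|=l(n/(p+1))$ (the relevant fraction being $n/(n-(n-p-1))=n/(p+1)$) and $|A_{yz}(n,r)|=l(n/(n-r))$. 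Substituting yields the asserted formula.

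For the inequality I would use the standard fact that the Hirzebruch--Jung continued fraction of a reduced fraction $n/m$ with $1\le m<n$ has length $1$ precisely when $n/m$ is an integer, i.e.\ when $m=1$; consequently $l(n/m)\ge 2$ as soon as $m\ge 2$ and $\gcd(m,n)=1$, since then $m\nmid n$. Thus the bound $d\ge 10$ follows once I show that each denominator $n-p$, $p+1$, and $n-r$ is at least $2$ and coprime to $n$.

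The last of these three checks is the only delicate point, and I expect it to be the main obstacle, since $r$ is defined only implicitly. For $n-p$ and $p+1$ the verification is immediate from the isolated-singularity hypotheses $\gcd(p,n)=\gcd(p+1,n)=1$ together with $1\le p$ and $n-p-1\ge 1$, which give $n-p\ge 2$ and $p+1\ge 2$. For $n-r$ I would unwind the definition $r\equiv q(n-p-1)\pmod n$, where $pq\equiv 1\pmod n$: reducing modulo $n$ gives $r\equiv -(q+1)$, hence $n-r\equiv q+1$, so $\gcd(n-r,n)=\gcd(q+1,n)=\gcd(r,n)=1$, the final equality holding because $r\equiv q(n-p-1)\pmod n$ is a product of two residues coprime to $n$. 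Finally $n-r=1$ would force $q\equiv 0\pmod n$, contradicting $\gcd(q,n)=1$, so $n-r\ge 2$. With all three lengths at least $2$ one concludes $d\ge 4+2+2+2=10$.
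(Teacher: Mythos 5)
Your proof is correct and takes essentially the same route as the paper: read off $d=4+e_1+e_2+e_3$ from the minimal generating set in Theorem \ref{the3}, identify each $e_i$ with a continued-fraction length via Theorem \ref{the2}, and then bound each length below by $2$. You are in fact more careful than the paper, whose proof only invokes coprimality of $p$, $n-p-1$, $r$ with $n$; your explicit verification that each denominator is at least $2$ --- in particular that $n-r\ge 2$, via $r\equiv -(q+1)\pmod n$ and $\gcd(q,n)=1$ --- fills in a step the paper leaves implicit.
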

\begin{proof}
Since the minimal embedding dimension $d$ of $\Bbb C^3/H_{n,k}$ is equal to the number of minimal generators, using Theorem \ref{the3}, we have 
$d=4+l(n/(n-p))+l(n/(p+1))+l(n/(n-r))$. And since $p,n-p-1$ and $r$ are coprime with $n$, we have $l(n/(n-p)),l(n/(p+1)),l(n/(n-r) \geq 2$. Hence
$$d\geq 10.$$.
\end{proof}

\begin{exa}
Let $H=H_{3,1}$ be the subgroup of $SL(3,\Bbb C)$ generated by $\langle \zeta(1/3),\zeta(1/3),\zeta(1/3)\rangle$. As in Example 2,  a set of minimal generators of $S_{3,1}$ are
$$g_1=x^3, \quad g_2=y^3,\quad g_3=z^3,\quad g_4=xyz, \quad f_{1,1}=x^2y,$$
$$f_{1,2}=xy^2, \quad f_{2,1}=x^2z,\quad f_{2,2}=xz^2,\quad f_{3,1}=y^2z, \quad f_{3,2}=yz^2.$$
And
$$R_{xy}(3,1)=\{g_1f_{1,2}=f_{1,1}^2,\quad g_1g_2=f_{1,1}f_{1,2}\quad f_{1,1}g_2=f_{1,2}^2\};$$
$$R_{xz}(3,1)=\{g_1f_{2,2}=f_{2,1}^2,\quad g_1g_3=f_{2,1}f_{2,2}\quad f_{2,1}g_3=f_{2,2}^2\};$$
$$R_{yz}(3,1)=\{g_2f_{3,2}=f_{3,1}^2,\quad g_2g_3=f_{3,1}f_{3,2}\quad f_{3,1}g_3=f_{3,2}^2\}.$$
And $\{gf-B(gf)\mid \text{ generators } g,f \text{ do not lie on the same side of triangle (\ref{ma1}) }\}$=
$$\{ g_1f_{3,1}=g_4f_{1,1}, \quad 
g_1f_{3,2}=g_4f_{2,1}, \quad g_2f_{2,1}=g_4f_{1,2}, \quad g_2f_{2,2}=g_4f_{3,1}, \quad g_3f_{1,1}=g_4f_{2,2},$$
$$g_3f_{1,2}=g_4f_{3,2},\quad
f_{1,1}f_{2,1}=g_1g_4, \quad f_{1,1}f_{2,2}=g_4f_{2,1}, \quad f_{1,1}f_{3,1}=g_4f_{1,2}, \quad f_{1,1}f_{3,2}=g_4^2,$$ 
$$ f_{1,2}f_{2,1}=g_4f_{1,1},
f_{1,2}f_{2,2}=g_4^2, \quad f_{1,2}f_{3,1}=g_2g_4, \quad f_{1,2}f_{3,2}=g_4f_{3,1}, \quad f_{2,1}f_{3,1}=g_4^2,$$
$$f_{2,1}f_{3,2}=g_4f_{2,2},\quad f_{2,2}f_{3,1}=g_4f_{3,2},\quad f_{2,2}f_{3,2}=g_3g_4\}.$$
\end{exa}
\section{Toric geometry perspective}
The cyclic quotient singularity is toric and we can use toric method to understand the examples studied above. 
Let's first review briefly the toric singularity, for more details, see \cite{CLS}. We start with a three dimensional standard lattice $N$, and 
its dual lattice $M$. A convex cone $\sigma$ in $N_R$ is defined by a set of lattice points $v_\rho$:
\begin{equation}
\sigma =\{r_1v_1+\ldots+r_nv_n,~r_i\geq 0\}.
\end{equation}
The dual cone is defined as 
\begin{equation}
\sigma^{\vee} = \{ m\cdot v_\rho \geq 0,~~m\in M_R\}.
\end{equation}
The toric singularity is defined as $Spec(\sigma^{\vee}\cap M)$. We have following facts:
\begin{itemize}
\item The Gorenstein condition implies that there is a lattice vector $m_0 \in M$ such that $m_0\cdot v_\rho =1$ for any vector $v_\rho$.
We can choose coordinate such that $v_\rho=(p_\rho,q_\rho,1)$, so a Gorenstein toric singularity is defined by a convex lattice polygon $P$. 
\item The isolated singularity implies that there is no internal lattice points on boundary edges of $P$. 
\item We are interested in the case where there is no flavor symmetry, and this implies that the local class group of the singularity is 
trivial. This implies that $P$ is a triangle. 
\end{itemize}
So we need to classify triangle $P$ with no lattice points on the boundary edges. Now we can put one vertex at origin using translational invariance, 
and we can also put another vertex at point $(1,0)$. The third vertex can be constrained so that  its coordinate is $(a,b)$ with $ a>0, b>0$. The constraints on $(a,b)$ so that 
there is no lattice point on boundary edges are 
\begin{equation}
(a,b)=1,~~(a-1,b)=1
\label{tr}
\end{equation}
Here $(p,q)$ means the maximal common divisor of $p~and~q$.  See figure. \ref{toric} for the example. 

Now let's compare our result with theorem \ref{the1}, where the defining data also involves two positive integers $n, p$ such that 
$(n,p)=1$ and $(n, n-p-1)=1$, with $0<p<n$. With some computation, one can see that the classification from toric perspective is 
the same as that from the quotient singularity point of view.

\begin{figure}[h]
\centering
  \includegraphics{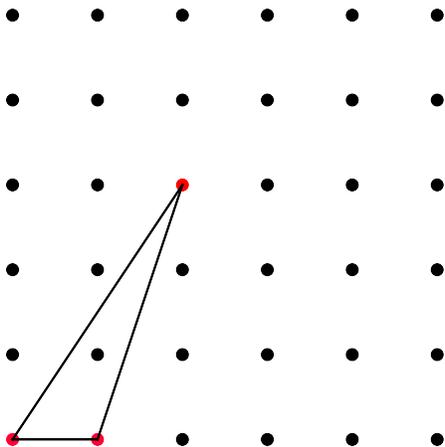}
  \caption{Isolated toric Gorenstein singularity with trivial class group is defined by a lattice triangle with no lattice points on the boundary.  }
  \label{toric}
  \end{figure}

Finally, we would like to point out that the deformation theory of isolated Gorenstein toric singularity has been studied in \cite{AL}, and above singularity 
is indeed rigid. The crepant resolution of the singularity is found from the unimodular lattice triangulation of $P$, from which we can read off the Higgs branch dimension.

\section{Discussion}
The singularities studied in this paper has trivial mini-versal deformation, and the underlying 
four dimensional $\mathcal{N}=2$ SCFT has no Coulomb branch (including mass deformation). The singularity admits 
non-trivial crepant resolution, and so it should have non-trivial Higgs branch. For example, 
$C^3/Z_3$ singularity has a crepant resolution with one exceptional divisor which is nothing but a 
$CP^2$.  There is one compact curve on resolved geometry and we expect  the Higgs branch 
to be one dimensional. This theory should have no flavor symmetry, since otherwise one can turn on 
mass deformation and then have non-trivial Coulomb branch. This fact is verified from toric point of view as the local class group is trivial.
While there are many 4d $\mathcal{N}=2$ theories admitting
no Higgs branch, to our knowledge we do not know any example admitting no Coulomb branch. 

From Higgs branch point view, the SCFT point is nontrivial as there are already massless degree of freedom in the deformed theory. The question is wether they 
are just free hypermultiplets. We used tensionless string argument to argue that the theory is interacting. Another reasoning 
is that if the theory is free,  we should see the flavor symmetry and the mass deformation which are all absent in the geometry. 
Given these reasonings, we tend to believe that the theory is interacting.   
We believe that  examples presented in this paper can help us better understand the space of  4d $\mathcal{N}=2$ SCFTs.

\providecommand{\bysame}{\leavevmode\hbox to3em{\hrulefill}\thinspace}
\providecommand{\MR}{\relax\ifhmode\unskip\space\fi MR }
\providecommand{\MRhref}[2]{%
  \href{http://www.ams.org/mathscinet-getitem?mr=#1}{#2}
}
\providecommand{\href}[2]{#2}

\end{document}